\numberwithin{algorithm}{section}
\begin{document}
\pagestyle{headings}  % switches on printing of running heads
\title{On the Problem of Wireless Scheduling with Linear Power Levels}
\author{\textsc{Tigran Tonoyan}\thanks{Research partially founded by FRONTS 215270}}

\institute{TCS Sensor Lab,
Centre Universitaire d'Informatique, route de Drize 7, 1227 Carouge, Geneva, Switzerland}

\maketitle              % typeset the title of the contribution

\begin{abstract}
In this paper we consider the problem of communication scheduling in wireless networks with respect to the SINR(Signal to Interference plus Noise Ratio) constraint in metric spaces. The nodes are assigned \emph{linear powers}, i.e. for each sender node the power is constant times the path loss between the sender and corresponding receiver. This is the minimal power for a successful transmission. We present a constant factor deterministic approximation algorithm, which works for at least Euclidean \emph{fading metrics}. Simultaneously we obtain the approximate value of the optimal schedule length with error at most a constant factor. To give an insight into the complexity of the problem, we show that in some metric spaces the problem is NP-hard and cannot be approximated within a factor less than 1.5. 
\end{abstract}

	\section{Introduction}
%INTRODUCTION
The problem of scheduling is the following: given some set of transmission requests or \emph{links} (sender-receiver node pairs in the network), the goal is to find a schedule, so that all the transmissions between those nodes can be done successfully in the minimum time. The main factor affecting the successful data transmissions in wireless networks is the signal interference of the concurrently transmitting nodes, which in general makes it impossible to do all the needed transmissions concurrently: there can be a receiver node, which cannot decode the data intended to it because of the ``noise'' made by other transmitions. So one needs to split the set of requests into subgroups, in each of which all sender nodes can transmit concurrently. Then all the data transmission can be done in a time proportional to the number of different groups in the schedule. The goal is to minimize the number of such subsets.

The solution of the problem depends crucially on the model of interference which is adopted for the given case. There are several models considered in the literature, such as \emph{protocol model} and \emph{physical model}. We consider the \emph{physical model}, which is shown to be more realistic than (traditional) protocol model. It assumes that the influence of a transmitting node on other nodes decreases proportionally to a constant power of the distance from that node (if there are no obstacles). Based on this model, the \emph{SINR (Signal to Interference plus Noise Ratio)} constraint is considered for reflecting the possibility or impossibility of accepting the signal of some sender by the corresponding receiver.

The solution of the scheduling problem depends also on the power levels of the nodes in the network: each node can transmit the data with a specific power, so the more is the power of a node, the stronger is the signal received by the intended receiver (also the more is the ``noise'' made by that node to other transmissions). Our results are for \emph{linear} power assignments, which along with \emph{uniform} power assignments are the most popular power schemes used in the literature. In case of the uniform power assignment all the nodes use the same power level, so this assignment is very simple to implement, but can lead to long schedules (hence an increased delay in the network). In case of linear power assignment the power of the sender of a link is constant times the \emph{path loss} between the sender and corresponding receiver, so this is the minimum possible power for a successful transmission. On the other hand, the optimum schedule length of the linear power can also be too long compared to some other power assignments.

	\textit{Related Work and Our Results}
%RELATED WORK
The algorithmic study of the problem of scheduling for arbitrary networks in SINR model gained considerable attention in last years because of the attempts to show that this model describes the physical reality more precise than more traditional \emph{graph-based models}, such as the \emph{protocol model}~\cite{antigrp}.  

There are several variants of scheduling problem considered in the literature. In~\cite{connectivity},~\cite{connectivity1}, \emph{connectivity} problem is considered from the scheduling perspective, where it is needed to find and schedule (into minimal number of slots) a set of links which form a connected structure. \emph{Scheduling with power control}, where for obtaining small schedules it is allowed as well to control the power levels of the transmitters, is considered in~\cite{first},~\cite{oblivious},~\cite{hal1},~\cite{halldorsson1},~\cite{tonoyanpc},~\cite{kesselheimpc}. In~\cite{kesselheimpc} a $\log{n}$-approximation algorithm is designed (where $n$ is the number of links), which uses power assignments of \emph{non-local} nature. In~\cite{hal1},~\cite{halldorsson1},~\cite{tonoyanpc} it is shown that the \emph{mean power assignment} is relatively efficient from the point of view of scheduling, when compared to other local power assignments, and approximation guarantees are proven for this power assignment. In fact in~\cite{oblivious}, generalizing the construction from~\cite{connectivity}, it is shown that for each local (or \emph{oblivious}, as they call it) power assignment, there are network instances, for which no non-trivial schedules can be obtained using this power assignment. On the other hand, in~\cite{hal1} it is shown that if the lengths of links differ not more than a constant factor, then the \emph{uniform power assignment} is a constant factor approximation for this problem.

The problem of scheduling with fixed power levels is consider for several power assignments, such as \emph{uniform, linear and mean} power assignments. There are $O(\log{n})$ approximation algorithms designed for the mean power assignment in~\cite{tonoyanpc}. The case of uniform power assignment is considered in~\cite{santienq},~\cite{gus2},~\cite{gus1},~\cite{dinitz},~\cite{hal}. To the best of our knowledge, the best approximation ratio obtained for this case is $O(\log{n})$, although constant factor approximation algorithms are designed for a related problem of \emph{capacity maximization} for a large family of power assignments~\cite{halmit} (in~\cite{hal} it is claimed that their algorithm approximates the optimal schedule length within a constant factor, but a flow was found in the proof). In~\cite{kesvok} $O(\log^2{n})$-approximation randomized distributed algorithms are designed for a range of power assignments, which is improved to $O(\log{n})$ factor in~\cite{halmit1}. The specific case of linear power assignments is considered e.g. in~\cite{chafekar},~\cite{fng}. In~\cite{fng} a randomized algorithm is proposed, which finds a schedule of length $O(I+\log^2{n})$, where $I$ is a lower bound on the optimal schedule length.

The paper is an extended version of~\cite{tonoyanlin}, where we propose a constant factor algorithm for scheduling w.r.t. fixed linear power assignment, and show that the optimal schedule length is $\Theta(I)$, where $I$ is as defined in~\cite{fng}. We also show that the problem cannot be approximated within a factor less than 1.5 (with assumption $P\neq NP$). 
 Whith the same methods as in ~\cite{fng}, our algorithm for linear powers also leads to a randomized algorithm for Cross-Layer Optimization problem, which improves the algorithms from ~\cite{fng} by a factor $\log{n}$.

	\section{ Formal Definition of the Problem} \label{notations}
%NOTATIONS
Throughout this work we assume the wireless network nodes to be statically located (i.e. the network is not mobile) in a metric measure space $X$ with a distance function $d$ and a measure $\mu$.

The ball in $X$ with center $p$ and radius $r>0$ is the set 
\[
B(p, r)=\left\{q\in X| d(p, q)<r \right\},
\]
and the ring with center $p$, width $w>0$ and outer radius $r>w$ is the set
\[
R(p, r, w)=\left\{q\in X|r-w\leq d(p,q)<r\right\}.
\]
For a ring $R=R(p,r,w)$ we denote $b(R)=B(p, r)$ and $B(R)=B(p, r-w)$.

We assume that the measure $\mu$ satisfies the following condition: for any two balls $A$ and $B$ with radii $a$ and $b$ respectively,
\begin{equation}\label{E:metric}
\frac{\mu(A)}{\mu(B)}\leq K\left(\frac{a}{b}\right)^m
\end{equation}
holds for some constants $K\geq1$ and $m\geq1$, which are specific to the metric space.

We are given a set of links $L=\{1,2,\dots,n\}$, where each link $v$ represents a communication request from a sender node $s_v$ to a receiver node $r_v$. The \emph{asymmetric distance} from a link $v$ to a link $w$ is $d_{vw}=d\left(s_v,r_w\right)$. The length of the link $v$ is  $d_{vv}=d\left(s_v,r_v\right)$. Each transmitter $s_v$ is assigned a power level $P_v$, which does not change. We assume that the strength of the signal decreases with the distance from the transmitter, i.e. the received signal strength from the sender of $w$ at the receiver of $v$ is $\displaystyle P_{wv}=\frac{P_w}{d_{wv}^\alpha}$, where $\alpha>0$ denotes the \emph{path-loss exponent}. For interference we adopt the \emph{SINR} model, where the transmission corresponding to a link $v$ is successful if and only if the following condition holds:
\begin{equation}\label{E:SINR}
P_{vv}\geq\beta\left(\sum_{w \in S\setminus v}P_{wv}+N\right),
\end{equation}
where $N\geq 0$ denotes the ambient noise, $\beta>1$ denotes the minimum SINR required for message to be successfully received, and $S$ is the set of concurrently scheduled links. We say that $S$ is \emph{feasible} if~(\ref{E:SINR}) is satisfied for all $v\in S$. The linear power assignment assigns each sender $s_v$ a power level $P_v=c_ld_{vv}^\alpha$, where $c_l$. The uniform power assignment assigns to each sender node the same power level $P$. 

A partition of the set $L$ into feasible subsets (or \emph{slots}) is called \emph{a schedule}. The number of subsets in a schedule is called \emph{the length} of the schedule. The problem we are interested in is to find a schedule of a minimum length, assuming that the linear power assignment is used.

	\section{Auxiliary Facts}\label{auxiliary}
%Lemmas

Here is a set of lemmas, which we will use in subsequent sections.

The follwing is a known bound for Riemann zeta function:
\begin{equation}\label{E:series}
\zeta(s)=\sum_{i=1}^{\infty}\frac{1}{i^s}\leq\frac{s}{s-1}\mbox{ , if }s>1,
\end{equation}
which can be proven by noticing that $\displaystyle\sum_{i=1}^{\infty}\frac{1}{i^s}\leq \int_{1}^{\infty}{\frac{1}{x^s}dx}+1$.

A proof of the following lemma can be found, for example, in~\cite{hrd}, page 28.
\begin{lemma}\label{L:ineq}
For real numbers $a_1,a_2,\dots,a_m(a_i\geq 0,i=1,2,\dots,m)$, and $r,s(0<r<s)$,
\[
\left(\sum_{i=1}^ma_i^s\right)^{\frac{1}{s}}<\left(\sum_{i=1}^ma_i^r\right)^{\frac{1}{r}}
\] holds, unless all $a_i$ but one are zero.
\end{lemma}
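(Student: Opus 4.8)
The plan is to normalize the right-hand side to $1$ and then compare the two sums term by term. First I would set aside the degenerate case in which every $a_i=0$: there both sides vanish and the hypothesis ``all but one are zero'' holds, so there is nothing to prove. Hence assume $\sum_{i=1}^m a_i^r>0$ and put $A=\bigl(\sum_{i=1}^m a_i^r\bigr)^{1/r}>0$ and $b_i=a_i/A$, so that $\sum_{i=1}^m b_i^r=1$ and in particular $0\le b_i\le 1$ for all $i$.

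The key elementary observation I would use is that for $t\in[0,1]$ and $0<r<s$ one has $t^s\le t^r$, with equality exactly when $t\in\{0,1\}$ (for $0<t<1$ this is just $t^{s-r}<1$). Applying this to each $b_i$ and summing yields $\sum_{i=1}^m b_i^s\le\sum_{i=1}^m b_i^r=1$; raising to the power $1/s$ and multiplying back by $A$ gives $\bigl(\sum_{i=1}^m a_i^s\bigr)^{1/s}\le A=\bigl(\sum_{i=1}^m a_i^r\bigr)^{1/r}$, which is the non-strict form of the inequality.

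For strictness, I would examine when equality can hold: it forces $b_i^s=b_i^r$ for every $i$, hence $b_i\in\{0,1\}$ for every $i$; combined with $\sum_{i=1}^m b_i^r=1$ this means precisely one $b_i$ equals $1$ and all others are $0$, i.e. exactly one $a_i$ is nonzero. So as soon as at least two of the $a_i$ are nonzero the inequality is strict. Conversely, if all $a_i$ but one vanish both sides equal the surviving $a_i$, so that is genuinely the only equality case.

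The whole argument is short, and I do not anticipate a real obstacle — this is the textbook proof of the monotonicity of $\ell^p$-norms in $p$. The only point needing any care is the equality discussion, and even that reduces to the one-variable fact that $t^s=t^r$ on $[0,1]$ only at $t=0$ and $t=1$, after which the normalization $\sum_{i=1}^m b_i^r=1$ forces exactly one surviving index.
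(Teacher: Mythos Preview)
Your argument is correct and is the standard textbook proof of the monotonicity of $\ell^p$-norms. The paper does not actually give its own proof of this lemma but merely cites Hardy, Littlewood, and P\'{o}lya's \emph{Inequalities}; the normalization-and-termwise-comparison argument you wrote is essentially the one found there.
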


For the next lemma, consider any given real numbers $a\geq 1$ and $c>0$, and the function $f(t)=(a+c)^t-a^t$. Note that $f(t)$ is a monotonically increasing function on $[1,\infty]$, as $f'(t)>0$ for $t\geq 1$. So $f(t)\leq f(\lceil t\rceil)$ for $t\geq 1$. For an integer $k\geq1$ we have
\[
(a+c)^k-a^k=c\sum_{i=0}^{k-1}{(a+c)^ia^{k-1-i}}\leq kc(a+c)^{k-1},
\]
so we have the following lemma:
\begin{lemma}\label{L:exponentialfunction}
For real numbers $a\geq 1,c>0,t\geq1$
\[
(a+c)^t-a^t\leq \lceil t\rceil c(a+c)^{\lceil t\rceil-1}.
\]
\end{lemma}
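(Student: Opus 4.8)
The plan is to reduce the claim to the case of an integer exponent via the monotonicity of $f$, and then to settle the integer case by an elementary factorization; both ingredients are essentially assembled in the discussion preceding the statement, so the proof amounts to combining them in the right order.

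First I would record that $f$ is nondecreasing on $[1,\infty)$. Since $a+c>a\geq 1$ we have $\ln(a+c)>\ln a\geq 0$ and $(a+c)^t>a^t>0$ for every $t\geq 1$, hence $f'(t)=(a+c)^t\ln(a+c)-a^t\ln a>0$. Consequently $f(t)\leq f(\lceil t\rceil)$ for all $t\geq 1$, which reduces the problem to bounding $(a+c)^k-a^k$ for the integer $k=\lceil t\rceil\geq 1$.

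Next I would apply the identity $x^k-y^k=(x-y)\sum_{i=0}^{k-1}x^i y^{k-1-i}$ with $x=a+c$ and $y=a$, so that $x-y=c$ and
\[
(a+c)^k-a^k=c\sum_{i=0}^{k-1}(a+c)^i a^{k-1-i}.
\]
Since $0<a\leq a+c$, each summand satisfies $(a+c)^i a^{k-1-i}\leq (a+c)^i(a+c)^{k-1-i}=(a+c)^{k-1}$, and there are exactly $k$ of them, so $(a+c)^k-a^k\leq kc(a+c)^{k-1}$. Combining this with the previous step and taking $k=\lceil t\rceil$ gives
\[
f(t)\leq f(\lceil t\rceil)=(a+c)^{\lceil t\rceil}-a^{\lceil t\rceil}\leq \lceil t\rceil\, c\,(a+c)^{\lceil t\rceil-1},
\]
which is exactly the asserted inequality.

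I do not anticipate a genuine obstacle here; the proof is short and self-contained. The only point that warrants a word of care is the sign of $f'$, which uses the hypothesis $a\geq 1$ to guarantee $\ln a\geq 0$ — a nonnegativity that need not hold without that assumption, and which is precisely what makes the reduction to $\lceil t\rceil$ legitimate.
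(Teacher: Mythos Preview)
Your proof is correct and follows essentially the same approach as the paper: first reduce to the integer exponent $k=\lceil t\rceil$ via the monotonicity of $f(t)=(a+c)^t-a^t$ on $[1,\infty)$, then use the telescoping factorization $(a+c)^k-a^k=c\sum_{i=0}^{k-1}(a+c)^i a^{k-1-i}$ and bound each of the $k$ summands by $(a+c)^{k-1}$. You supply a bit more detail than the paper (writing out $f'$ explicitly and noting where the hypothesis $a\geq 1$ enters), but the argument is the same.
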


	\section{The scheduling algorithm} \label{algorithm}
%THE ALGORITHM
In this section we present a scheduling algorithm, which is very simple and approximates the optimal schedule length within a constant factor. It is assumed that the linear power assignment is used for the power levels.

As in other scheduling algorithms, instead of using the SINR formula in the form of~(\ref{E:SINR}), we use the inverse of it, which has the useful property of being additive, and is easier to deal with.
\begin{definition}
The affectance of a link $v$, caused by a set $S$ of links, is the following sum of relative interferences of the links from $S$ on $v$,
\[
a_S(v)=\sum_{w \in S\setminus v} \left(\frac{d_{ww}}{d_{wv}}\right)^\alpha.
\]
\end{definition}
With the affectance defined, SINR constraint for a set of links $S$ and a link $v$ can be written as 
\begin{equation}\label{E:SINRlin}
\displaystyle a_S(v)\leq \frac{1}{\beta}-\frac{N}{c_l}.
\end{equation}
 For simplicity of writing we denote the right side by $1/\beta$.

\subsection{Formulation of the algorithm}
The algorithm (pseudocode is presented as Algorithm~\ref{A:LINEAR}) is a \emph{greedy algorithm}, which sorts all the links in descending order of the length, and starting from the first one, adds each link to the first slot, in which already scheduled links influence this one no more than a predefined constant. As we will see afterwards, this special ordering is needed only for feasibility of the resulting schedule, whereas the proof of approximation factor does not depend on this order.

The precise value of the constant $c>3$ used in the algorithm will be defined afterwards.

\renewcommand {\labelenumi}{\arabic{enumi}.}
\renewcommand{\labelenumii}{\arabic{enumi}.\arabic{enumii}}
\begin{algorithm}\label{A:LINEAR}\caption{Scheduling w.r.t. linear power assignment.}
\begin{enumerate}
\item {Input: the links $1,2,\dots,n$}
\item {sort the links in descending order of their lengths: $l_1,l_2,\dots,l_n$}
\item {$S_i\leftarrow \emptyset,i=1,2,\dots$}
\item {for $t\leftarrow 1 \mbox{ to }n$ do}
\begin{enumerate}
\item {find the smallest $i$, such that $\displaystyle a_{S_i}\left(l_t\right)\leq\frac{1}{c^\alpha}$}
\item {schdule $l_t$ with $S_i$: $S_i:=S_i\cup l_t$}
\end{enumerate}
\item {output: $\left(S_1,S_2,\dots\right)$}
\end{enumerate}
\end{algorithm}

\subsection{Correctness of the algorithm}
%CORRECTNESS OF THE ALGORITHM

Consider the set of links $S$ assigned to the same slot by the algorithm, and $v\in S$. Let $S^-$ denote the subset of $S$, which contains the links shorter than $v$. It is enough to show that $a_{S^-}(v)$ is small for each slot $S$ and $v\in S$. To show this we will use a standard area argument.

We start with a simple lemma, which shows that if two links are scheduled in the same slot, then they should be spatially separated. Let the links $w$ and $v$ be assigned to the same slot by the algorithm, and $d=\max {\left\{d_{vv},d_{ww}\right\}}$.
\begin{lemma}\label{L:sizes}
For any two links $w$ and $v$, which are as above, the following holds:
\[
d_{vw}\geq(c-2)d\mbox{, }d_{wv}\geq(c-2)d \mbox{ and } d\left(s_v,s_w\right)\geq(c-3)d.
\]
\end{lemma}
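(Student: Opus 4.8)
The plan is to use only the stopping condition of the greedy step together with the triangle inequality, exploiting the fact that the links are processed in non-increasing order of length. Assume without loss of generality that $w$ is considered before $v$ by the algorithm; then $d_{ww}\ge d_{vv}$, so $d=d_{ww}$. (The case in which $v$ precedes $w$ is completely symmetric and will give the same conclusions with the roles of $v$ and $w$ exchanged.)

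First I would extract the one inequality that the algorithm actually guarantees. At the moment the algorithm places $v$ into its slot $S$, it has verified that $a_S(v)\le 1/c^\alpha$, and at that moment $w$ already belongs to $S$. Since every summand of $a_S(v)=\sum_{u\in S\setminus v}(d_{uu}/d_{uv})^\alpha$ is nonnegative, the single term corresponding to $w$ already satisfies $(d_{ww}/d_{wv})^\alpha\le 1/c^\alpha$, i.e. $d_{wv}\ge c\,d_{ww}=cd\ge(c-2)d$. This is the only place where the affectance/SINR threshold of the algorithm enters the proof; the remaining two bounds are derived from this one purely by the triangle inequality, using $d_{ww}\le d$ and $d_{vv}\le d$.

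Next, for $d_{vw}$ I would use the chain $s_w\to r_w\to s_v\to r_v$, giving $d_{wv}=d(s_w,r_v)\le d(s_w,r_w)+d(s_v,r_w)+d(s_v,r_v)=d_{ww}+d_{vw}+d_{vv}\le d_{vw}+2d$, hence $d_{vw}\ge d_{wv}-2d\ge cd-2d=(c-2)d$. For the sender–sender distance I would use the shorter chain $s_w\to s_v\to r_v$, so $d_{wv}=d(s_w,r_v)\le d(s_w,s_v)+d(s_v,r_v)=d(s_v,s_w)+d_{vv}$, which yields $d(s_v,s_w)\ge d_{wv}-d\ge(c-2)d-d=(c-3)d$. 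All three resulting inequalities are symmetric in $v$ and $w$, so they hold regardless of which of the two links was scheduled first, and the lemma follows.

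I do not expect a genuine obstacle; the only point requiring care is the asymmetry of affectance. The algorithm controls only the influence of the already-scheduled (hence at least as long) link $w$ on the newly inserted link $v$, so it is precisely the distance $d_{wv}$ from $s_w$ to $r_v$ that gets bounded directly, and the other two distances must be recovered from it via the triangle inequality together with the crude length bounds $d_{ww},d_{vv}\le d$. (In fact the sender–sender bound can be sharpened to $d(s_v,s_w)\ge(c-1)d$, but the weaker $(c-3)d$ suffices for the later area argument and keeps the statement uniform.)
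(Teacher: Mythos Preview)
Your argument is correct. The paper itself does not spell out a proof of this lemma; it merely introduces it as ``a simple lemma'' and states it. Your derivation---pick the link inserted later, read off $d_{wv}\ge c\,d$ from the single-term lower bound on the affectance threshold, and then recover $d_{vw}$ and $d(s_v,s_w)$ by the triangle inequality---is exactly the intended (and essentially only) route.

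One cosmetic point: in your last chain you use the weakened bound $d_{wv}\ge(c-2)d$ rather than the $d_{wv}\ge cd$ you have already established, which is why you land on $(c-3)d$ instead of the sharper $(c-1)d$ you mention in your closing remark. Since the lemma asks only for $(c-3)d$, this is harmless, but you may as well write $d(s_v,s_w)\ge d_{wv}-d_{vv}\ge cd-d=(c-1)d\ge(c-3)d$ directly to avoid the appearance of circularity.
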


Feasibility of the schedule is shown
\begin{lemma} \label{L:spl}
There exists a constant $c_0$, depending only on $m$, $K$ and $\alpha$, such that for the link $v$ and the set of links $S^-$ as above,
\[
a_{S^-}(v)\leq \frac{c_0}{(c-3)^\alpha},
\]
holds, if $\alpha>\displaystyle\frac{m}{m+1-\lceil m\rceil}$.
\end{lemma}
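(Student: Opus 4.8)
The plan is to bound the affectance $a_{S^-}(v)=\sum_{w\in S^-}(d_{ww}/d_{wv})^\alpha$ by a standard area (volume) argument, grouping the links of $S^-$ into concentric rings around the receiver $r_v$ and exploiting the separation guaranteed by Lemma~\ref{L:sizes}. Since every $w\in S^-$ is no longer than $v$, for such $w$ we have $d=\max\{d_{vv},d_{ww}\}=d_{vv}$, so Lemma~\ref{L:sizes} gives $d_{wv}\ge(c-2)d_{vv}$ and, more importantly, that the senders of any two links in $S^-\cup\{v\}$ are pairwise at distance at least $(c-3)$ times the longer link's length. First I would define, for $i\ge1$, the ring $R_i$ of links $w\in S^-$ whose sender $s_w$ lies at distance in $[\,i(c-3)d_{vv},\,(i+1)(c-3)d_{vv}\,)$ from $r_v$ (the innermost radius being justified by $d_{wv}\ge(c-2)d_{vv}\ge(c-3)d_{vv}$). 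For $w\in R_i$ one has $d_{wv}\ge i(c-3)d_{vv}$, hence $(d_{ww}/d_{wv})^\alpha\le (d_{vv}/d_{wv})^\alpha\le 1/(i(c-3))^\alpha$, so $a_{S^-}(v)\le\sum_i |R_i|\,/\,(i(c-3))^\alpha$, and it remains to bound the cardinality $|R_i|$.

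The counting of $|R_i|$ is where the measure axiom~(\ref{E:metric}) enters. The senders $\{s_w:w\in R_i\}$ are $(c-3)d_{vv}$-separated for the short ones but in general $w$-separated by $(c-3)d_{ww}$ with $d_{ww}\le d_{vv}$, so I would put a ball of radius $\tfrac12(c-3)d_{ww}$ around each $s_w$; these balls are pairwise disjoint and all contained in a ball of radius $O(i(c-3)d_{vv})$ about $r_v$. The subtlety is that the small balls have varying radii $d_{ww}$, so a naive volume division only yields $|R_i|\le K\sum_{w\in R_i}(d_{vv}/d_{ww})^m\cdot O(i^m)$ — the wrong direction. The right move is to charge affectance directly: using~(\ref{E:metric}), $\mu(B(s_w,\tfrac12(c-3)d_{ww}))\ge \mu(B(r_v, O(i(c-3)d_{vv})))\cdot \tfrac1K\big(\tfrac{d_{ww}}{O(i d_{vv})}\big)^m$, and summing the disjointness relation gives $\sum_{w\in R_i}(d_{ww})^m\le O(K)\,(i d_{vv})^m$, i.e. $\sum_{w\in R_i}(d_{ww}/d_{vv})^m\le O(K)\,i^m$. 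Then I convert the $\alpha$-sum to the $m$-sum: since $\alpha$ may be smaller than $m$, I would split links of $R_i$ by the dyadic scale of $d_{ww}/d_{vv}\in[2^{-(j+1)},2^{-j})$, count at most $O(K)\,i^m 2^{jm}$ links at scale $j$ (from the displayed bound), each contributing at most $2^{-j\alpha}/(i(c-3))^\alpha$, and sum the geometric-type series over $j$ — this converges precisely when $\alpha$ beats the exponent coming from $m$ after accounting for the ceiling losses, which is exactly the hypothesis $\alpha>\tfrac{m}{m+1-\lceil m\rceil}$. (This is where Lemma~\ref{L:exponentialfunction} and the zeta bound~(\ref{E:series}) get used to keep the constant $c_0$ finite and dependent only on $m,K,\alpha$.)

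Finally, summing over $i\ge1$ produces $a_{S^-}(v)\le \tfrac{1}{(c-3)^\alpha}\sum_{i\ge1}\tfrac{O(K\cdot\text{(scale sum)})}{i^{\alpha-(\text{something})}}$, and the outer series over $i$ converges by~(\ref{E:series}) under the same condition on $\alpha$, yielding a bound of the claimed form $c_0/(c-3)^\alpha$ with $c_0=c_0(m,K,\alpha)$. The main obstacle I anticipate is the bookkeeping in the two-parameter summation (over rings $i$ and over length-scales $j$ within a ring): getting the exponents to line up so that both series converge simultaneously is exactly what forces the precise threshold $\alpha>m/(m+1-\lceil m\rceil)$, and care is needed because $\lceil m\rceil$ can exceed $m$, making $m+1-\lceil m\rceil$ small or even forcing the interesting regime to be $m$ just below an integer. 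Everything else — the ring decomposition, the disjoint-ball packing, the term-by-term affectance estimate — is routine once Lemma~\ref{L:sizes} supplies the separation.
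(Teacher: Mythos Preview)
Your overall strategy---ring decomposition around $r_v$, disjoint-ball packing via Lemma~\ref{L:sizes}, then summation over rings---is the same as the paper's. But there is a genuine gap in your volume step that breaks the final summation.

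You pack the balls $B(s_w,\tfrac12(c-3)d_{ww})$ into the \emph{full ball} of radius $O(i(c-3)d_{vv})$ about $r_v$, obtaining $\sum_{w\in R_i}d_{ww}^m\le O(K)\,(id_{vv})^m$. Feeding this through either your dyadic argument or a direct affectance estimate gives a per-ring contribution of order $i^{m-\alpha}/(c-3)^\alpha$, so the sum over $i\ge1$ converges only when $\alpha>m+1$. The lemma's hypothesis, however, is $\alpha>m/(m+1-\lceil m\rceil)$, which for integer $m$ (in particular for Euclidean space) is just $\alpha>m$; so for $m<\alpha\le m+1$ your series diverges and the argument fails.

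The fix, and what the paper does, is to exploit that the senders of $R_i$ lie in an \emph{annulus}, not a ball: the disjoint balls are contained in a slightly fattened ring $R_i'$, whose measure is the difference of two ball measures. Applying~(\ref{E:metric}) to both and then Lemma~\ref{L:exponentialfunction} to the resulting difference $(i+3/2)^m-(i-1/2)^m$ yields the sharper bound $\sum_{w\in R_i}d_{ww}^m\le C\,d_{vv}^m\,i^{\lceil m\rceil-1}$, saving essentially one power of $i$. This is precisely where the $\lceil m\rceil$ enters; your remark that ``ceiling losses'' appear in the $j$-sum is misplaced---they come from this annulus estimate, not from the length-scale decomposition.

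Two smaller points. First, the hypothesis $\alpha>m/(m+1-\lceil m\rceil)$ always forces $\alpha>m$ (since $m+1-\lceil m\rceil\le1$), so your worry that ``$\alpha$ may be smaller than $m$'' never materialises. Second, once you have the annulus bound, the paper bypasses your dyadic scheme entirely: since $\alpha>m$, Lemma~\ref{L:ineq} (the power-mean inequality) gives $\sum_{w\in R_i}d_{ww}^\alpha\le\bigl(\sum_{w\in R_i}d_{ww}^m\bigr)^{\alpha/m}$ in one line, and combining with $d_{wv}\ge(c-3)id_{vv}$ produces a per-ring bound of order $i^{-\alpha(m+1-\lceil m\rceil)/m}/(c-3)^\alpha$. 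The zeta bound~(\ref{E:series}) then handles the $i$-sum under exactly the stated threshold. Your dyadic route would also work once the annulus bound is in place (and would in fact give the slightly milder requirement $\alpha>\lceil m\rceil$), but the power-mean step is cleaner and is what produces the specific threshold in the statement.
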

\begin{proof}
For simplicity, throughout this proof we denote $q=c-2$. Consider the partition of the metric space into concentric rings $R_i=R(r_v,(i+1)qd_{vv},qd_{vv})$ for $i=1,2,\dots$, and the ball $B(r_v,qd_{vv})$. From Lemma~\ref{L:sizes} and definition of $S^-$ it follows that there are no senders from $S^-$ inside $B(r_v,qd_{vv})$. Now for some $i>0$ consider the links from $S^-$ with senders inside $R_i$, and denote that set by $S^-_i$. For each link $w$ denote $\displaystyle \rho_w=\frac{(q-1)d_{ww}}{2}$. Then it follows from the last inequality of Lemma~ \ref{L:sizes} that for each such link $w$ the ball $B(s_w,\rho_w)$ doesn't intersect the corresponding ball of any other link. Further, all such balls are contained in the ring $R_i'=R(r_v,(i+1)qd_{vv}+\rho_v,qd_{vv}+2\rho_v)$. So from the countable additivity of $\mu$ it follows that
\[
\sum_{w\in S^-_i}\mu(B(s_w,\rho_w))\leq\mu(R_i')=\mu(B(R_i'))-\mu(b(R_i')) \mbox{ or, as }K\geq1,
\]
\begin{equation}\label{E:areaargument}
\sum_{w\in S^-_i}\frac{\mu(B(s_w,\rho_w))}{\mu(B(R_i'))}\leq 1-\frac{\mu(b(R_i'))}{\mu(B(R_i'))}\leq K-\frac{\mu(b(R_i'))}{\mu(B(R_i'))}.
\end{equation}
From (\ref{E:metric}) we have the following inequalities for each link $w$:
\[
\frac{\mu(b(R_i'))}{\mu(B(R_i'))}\leq K\left(\frac{iqd_{vv}-\rho_v}{(i+1)qd_{vv}+\rho_v}\right)^m\mbox{ and}
\]
\[
\frac{\mu(B(s_w,\rho_w))}{\mu(B(R_i'))}\geq \frac{1}{K}\left(\frac{\rho_w}{(i+1)qd_{vv}+\rho_v}\right)^m,
\]
which combined with (\ref{E:areaargument}) leads to the following:
\begin{multline}\label{E:pr}
\sum_{w\in S^-_i}\rho_w^m\leq K^2\left(\left((i+1)qd_{vv}+\rho_v\right)^m-\left(iqd_{vv}-\rho_v\right)^m\right)\leq\\
\leq K^2\left(qd_{vv}\right)^m\left( (i+3/2)^m-(i-1/2)^m\right)\leq \frac{3^{m-1}}{2^{m-2}}\lceil m\rceil K^2\left(qd_{vv}\right)^mi^{\lceil m\rceil-1}
\end{multline}
where we used Lemma \ref{L:exponentialfunction} and the fact, that $\displaystyle\rho_v<qd_{vv}/2$. Dividing both sides of (\ref{E:pr}) by $\displaystyle \left(\frac{q-1}{2}\right)^m$ and replacing $q-1$ by $q/2$ in denominator, we get
\begin{equation}\label{E:bd}
\sum_{w\in S^-_i}d_{ww}^m\leq 2\cdot 3^{m}\lceil m\rceil K^2d_{vv}^m i^{\lceil m\rceil-1}.
\end{equation}

On the other hand, from the triangle inequality and the definition of ring $R_i$, the following holds: $d_{wv}\geq d(s_w,s_v)-d(s_v,r_v)\geq (q-1)d_{vv}i$, so
\begin{equation}\label{E:ba}
a_{S^-_i}(v)\leq\frac{\sum_{w\in S^-_i}d_{ww}^\alpha}{\left((q-1)d_{vv}i\right)^\alpha}
\end{equation}
Using Lemma~\ref{L:ineq}, from~(\ref{E:bd}) and~(\ref{E:ba}) we get an upper bound on the affectance of the senders from $R_i$:
\begin{equation*}
a_{S^-_i}(v)<\frac{\left(\sum_{w\in S^-_i}d_{ww}^m\right)^{\alpha/m}}{((q-1)d_{vv}i)^\alpha}  \leq \frac{\left( 2\cdot 3^{m}\lceil m\rceil K^2d_{vv}^m i^{\lceil m\rceil-1}\right)^{\alpha/m}}{((q-1)d_{vv}i)^\alpha} \leq  \frac{3^{\alpha}\left(2\lceil m\rceil K^2\right)^{\alpha/m}}{(q-1)^\alpha i^{\alpha\left(\frac{m+1-\lceil m\rceil}{m}\right)}},i=1,2,\dots
\end{equation*}
By summing over $i$, and using (\ref{E:series}), we complete the proof of the lemma (as we have $\alpha>\displaystyle\frac{m}{m+1-\lceil m\rceil}$):
\begin{equation*}
a_{S^-}(v)\leq \frac{3^\alpha\left(2\lceil m\rceil K^2\right)^{\alpha/m}}{(q-1)^\alpha}\sum_{i=1}^{\infty}\frac{1}{i^{\alpha\left(\frac{m+1-\lceil m\rceil}{m}\right)}}\leq\frac{3^\alpha\left(2\lceil m\rceil K^2\right)^{\alpha/m}}{(q-1)^\alpha}\cdot\frac{\alpha\left(m+1-\lceil m\rceil\right)}{\alpha\left(m+1-\lceil m\rceil\right)-m},
\end{equation*}
so we have $\displaystyle c_0=\frac{3^\alpha\left(2\lceil m\rceil K^2\right)^{\alpha/m}\alpha\left(m+1-\lceil m\rceil\right)}{\alpha\left(m+1-\lceil m\rceil\right)-m}$ \qed
\end{proof}

Having Lemma~\ref{L:spl}, the proof of the following theorem is easy.
\begin{theorem}\label{T:feasibility}
If $c\geq \sqrt[\alpha]{\beta\left(c_0+1\right)}+3$ and $\alpha>\displaystyle\frac{m}{m+1-\lceil m\rceil}$, then the output of the algorithm is a feasible schedule.
\end{theorem}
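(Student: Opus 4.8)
The plan is to combine the two preceding results --- the spatial-separation Lemma~\ref{L:sizes} and the affectance bound of Lemma~\ref{L:spl} --- with the greedy selection rule of the algorithm to verify that every slot $S_i$ output by Algorithm~\ref{A:LINEAR} satisfies the SINR constraint in the form~(\ref{E:SINRlin}), i.e.\ $a_{S_i}(v)\le 1/\beta$ for every $v\in S_i$. Fix a slot $S$ and a link $v\in S$. The natural move is to split the affectance $a_S(v)=\sum_{w\in S\setminus v}(d_{ww}/d_{wv})^\alpha$ into the contribution of links \emph{longer than} (or equal to) $v$ and the contribution of links \emph{shorter than} $v$, namely $a_S(v)=a_{S^+}(v)+a_{S^-}(v)$ where $S^-$ is exactly the set appearing in Lemma~\ref{L:spl}.

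For the shorter links, Lemma~\ref{L:spl} immediately gives $a_{S^-}(v)\le c_0/(c-3)^\alpha$, and by the hypothesis $c\ge\sqrt[\alpha]{\beta(c_0+1)}+3$ we have $(c-3)^\alpha\ge\beta(c_0+1)$, so $a_{S^-}(v)\le c_0/(\beta(c_0+1))$. For the longer links the key observation is the order in which the algorithm processes links: since links are sorted in descending order of length, every $w\in S^+$ was already present in the slot $S$ at the moment $v$ was added to it; hence the greedy test in step~4.1 guarantees $a_{S^+}(v)\le 1/c^\alpha\le 1/(c-3)^\alpha\le 1/(\beta(c_0+1))$. (One should be slightly careful: the test bounds the affectance of $v$ from the links already in the slot, which is precisely $a_{S^+}(v)$ plus possibly some links of length equal to $d_{vv}$ that were added earlier; all of these are accounted for because the test is applied to the current content of $S_i$, so $a_{S^+}(v)\le 1/c^\alpha$ holds for the full set of predecessors.) Adding the two bounds yields
\[
a_S(v)=a_{S^+}(v)+a_{S^-}(v)\le\frac{1}{\beta(c_0+1)}+\frac{c_0}{\beta(c_0+1)}=\frac{c_0+1}{\beta(c_0+1)}=\frac{1}{\beta},
\]
which is exactly~(\ref{E:SINRlin}), so $S$ is feasible. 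Since this holds for every slot, the output is a feasible schedule, provided $\alpha>m/(m+1-\lceil m\rceil)$ so that Lemma~\ref{L:spl} applies.

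The routine part is the arithmetic above; the only genuine point requiring care --- and the place I would spend the most attention --- is the bookkeeping for $S^+$: one must be sure that the greedy acceptance criterion, which is checked against whatever links are \emph{currently} in $S_i$ when $l_t$ is inserted, really does dominate $a_{S^+}(v)$ for the \emph{final} slot contents, and that links of length exactly equal to $d_{vv}$ are handled consistently (they may go into either $S^+$ or be excluded from $S^-$, but in all cases they were inserted before $v$, so they fall under the step~4.1 bound). Once that is pinned down, combining it with Lemma~\ref{L:spl} and the stated choice of $c$ closes the argument; no further estimates are needed.
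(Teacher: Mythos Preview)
Your proposal is correct and follows exactly the approach the paper intends: the paper itself omits the proof, remarking only that ``Having Lemma~\ref{L:spl}, the proof of the following theorem is easy,'' after having set up the decomposition $a_S(v)=a_{S^+}(v)+a_{S^-}(v)$ in the paragraph preceding Lemma~\ref{L:spl}. Your handling of the descending-length ordering (so that the greedy test bounds $a_{S^+}(v)$) and the arithmetic combining $1/c^\alpha$ with $c_0/(c-3)^\alpha$ via $(c-3)^\alpha\ge\beta(c_0+1)$ is precisely what is required, and your remark on ties is a correct clarification.
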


	\subsection{The approximation ratio}
%THE APPROXIMATION RATIO
In this section we show that the algorithm outputs a schedule, which is longer than the optimal one no more than by a constant factor.

The following definition is taken from~\cite{fng}.
\begin{definition}
Let $S$ be a set of transmission requests and $p$ a node in the network, then we define 
\[
I_p(S)=\sum_{w\in S}\min{\left\{1,\left(\frac{d_{ww}}{d(s_w,p)}\right)^\alpha\right\}}\mbox{, and }I(S)=\max_{p}{I_p(S)}.
\]
\end{definition}
When $S$ is the set of all links (which we denoted by $L$), we use the notation $I(L)=I$. $I$ is a measure of interference, which in~\cite{fng} is shown to be a lower bound(with a constant factor) for optimal schedule length in case of linear power assignments.
\begin{theorem}~\cite{fng}\label{T:lowerbound}
If $T$ is the minimum schedule length, then $T=\Omega(I)$. 
\end{theorem}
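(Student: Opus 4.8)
The plan is to show that any feasible schedule for the linear power assignment needs $\Omega(I)$ slots, which amounts to arguing that within a single feasible slot $S$ the interference measure $I(S)$ cannot exceed some absolute constant. Since $I$ is a maximum over all network nodes $p$ of the sum $I_p(L)=\sum_{w}\min\{1,(d_{ww}/d(s_w,p))^\alpha\}$, and since the sum of $I_p$ over the slots of a schedule of length $T$ is at least $I_p(L)$ for the worst $p$, it will follow that $T\cdot O(1)\geq I$, i.e. $T=\Omega(I)$.

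So the core step is: fix a feasible slot $S$ and an arbitrary node $p$; bound $I_p(S)=\sum_{w\in S}\min\{1,(d_{ww}/d(s_w,p))^\alpha\}$ by a constant. First I would split $S$ into the links $w$ with $d(s_w,p)\le d_{ww}$ (the ``near'' links, each contributing at most $1$ to $I_p$) and the ``far'' links with $d(s_w,p)>d_{ww}$, which contribute $(d_{ww}/d(s_w,p))^\alpha$. For the near links, I would use the feasibility/affectance machinery already in place: if $p$ were exactly some receiver $r_v$, then $a_S(v)\le 1/\beta$ directly bounds $\sum_{w}(d_{ww}/d_{wv})^\alpha$; for a general node $p$ one instead picks the link $u\in S$ with smallest length among the near links (or uses $p$ as a fictitious receiver) and applies the spatial-separation Lemma~\ref{L:sizes}-style argument — links in a feasible slot are pairwise well separated relative to their lengths, so only boundedly many can have a sender within distance $d_{ww}$ of a common point $p$, and the far tail is controlled exactly as in the proof of Lemma~\ref{L:spl} by the ring/area argument: partition space into rings around $p$, use the measure condition~(\ref{E:metric}) to bound how many senders of each length scale fit in each ring, and sum the resulting geometric-type series using~(\ref{E:series}).

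Concretely, for the far links I would group them by ring $R_i=R(p,(i+1)\delta,\delta)$ for a suitable $\delta$ tied to the separation constant, note that senders in $R_i$ that are also ``far'' from $p$ have $d_{ww}<d(s_w,p)\le (i+1)\delta$, apply the same estimate as in~(\ref{E:bd}) to bound $\sum_{w\in S_i}d_{ww}^m$ by $O(\delta^m i^{\lceil m\rceil-1})$, then use Lemma~\ref{L:ineq} to pass from the $m$-norm to the $\alpha$-norm and get $\sum_{w\in S_i}(d_{ww}/(i\delta))^\alpha = O(i^{-\alpha(m+1-\lceil m\rceil)/m})$, which is summable over $i$ under the stated hypothesis $\alpha>m/(m+1-\lceil m\rceil)$. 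Summing the near contribution (a constant number of $1$'s) and the far contribution (a convergent series) yields $I_p(S)=O(1)$ with the constant depending only on $m,K,\alpha,\beta$, as desired.

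The main obstacle I anticipate is that $p$ is an arbitrary point, not a receiver of a link in $S$, so the SINR inequality~(\ref{E:SINR}) does not apply to $p$ directly; the separation Lemma~\ref{L:sizes} as stated compares two links in a slot, not a link and a free point. I would handle this either by treating $p$ as the receiver of a ``virtual'' link of infinitesimal length and re-running the area argument (the geometry only uses that the balls $B(s_w,\rho_w)$ of radius proportional to $d_{ww}$ are disjoint, which follows from pairwise separation of the real links in $S$, independently of $p$), or by a short case analysis showing that at most a constant number of links in $S$ can have a sender within $d_{ww}$ of $p$ — again from the packing/measure bound~(\ref{E:metric}). Either way the quantitative heart is a repeat of the computation in Lemma~\ref{L:spl}, so I expect no genuinely new difficulty beyond bookkeeping. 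Note also that this theorem is attributed to~\cite{fng}, so the "proof" here may simply be a sketch or a pointer; I would present it as the self-contained area argument above for completeness.
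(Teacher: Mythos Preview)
The paper does not prove Theorem~\ref{T:lowerbound}; it is quoted from~\cite{fng} and used as a black box in Theorem~\ref{T:basic}. You already suspected this.

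Your reconstruction is sound, but heavier than necessary, and it leans on machinery that strictly speaking is not available. Lemma~\ref{L:sizes} and the ring argument in Lemma~\ref{L:spl} are stated for slots produced by Algorithm~\ref{A:LINEAR}; the separation constant $c-3$ comes from the algorithm's threshold, not from SINR feasibility. For an \emph{arbitrary} feasible slot you would first have to re-derive a separation estimate from $(d_{ww}/d_{wv})^\alpha\le 1/\beta$, which is doable but is a different lemma with different constants. More importantly, your route needs the measure hypothesis~(\ref{E:metric}) and the restriction $\alpha>m/(m+1-\lceil m\rceil)$, whereas the lower bound in~\cite{fng} holds in any metric space and for any $\alpha$.

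The original argument is a direct comparison, with no packing at all. Fix a feasible slot $S$ and a point $p$, and let $u\in S$ have $s_u$ closest to $p$. For every $w\in S\setminus\{u\}$ one has $d(s_u,p)\le d(s_w,p)$, hence $d_{uw}\le 2d(s_w,p)+d_{ww}$ and $d_{wu}\le 2d(s_w,p)+d_{uu}$ by the triangle inequality. If $d(s_w,p)\ge d_{ww}$, then feasibility at $r_w$ gives $d_{uu}\le \beta^{-1/\alpha}d_{uw}\le 3\beta^{-1/\alpha}d(s_w,p)$, so $d_{wu}\le(2+3\beta^{-1/\alpha})d(s_w,p)$ and the $I_p$-term is at most $(2+3\beta^{-1/\alpha})^\alpha(d_{ww}/d_{wu})^\alpha$; summing and using $a_S(u)\le 1/\beta$ bounds these terms by a constant. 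If $d(s_w,p)<d_{ww}$, the same two triangle inequalities combined with feasibility at $r_w$ give $d_{uu}<3\beta^{-1/\alpha}d_{ww}$, hence $d_{wu}<5d_{ww}$; since $\sum_{w\ne u}(d_{ww}/d_{wu})^\alpha\le 1/\beta$, there are at most $5^\alpha/\beta$ such links, each contributing $1$. Adding the single term for $u$ yields $I_p(S)=O(1)$. This is what~\cite{fng} do; your area argument would also get there but with an unnecessary dependence on the fading-metric structure.
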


Using Theorem~\ref{T:lowerbound} it is easy to prove the approximation ratio.
\begin{theorem}\label{T:basic}
a)~If $c>1$ and the output of Algorithm~\ref{A:LINEAR} is a feasible schedule, then it is a constant factor approximation for scheduling with linear powers,
b)~the optimal schedule length in case of linear powers is $\Theta (I)$.
\end{theorem}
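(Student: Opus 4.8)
\textit{Proof proposal.}

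The plan is to reduce both parts to a single estimate: the schedule produced by Algorithm~\ref{A:LINEAR} has length $O(I)$. Granting this, part~a) is immediate, since Theorem~\ref{T:lowerbound} gives $T=\Omega(I)$ for the optimal length $T$, whence the algorithm's length is $O(I)=O(T)$; and part~b) follows by additionally invoking Theorem~\ref{T:feasibility}: for $c$ chosen large enough (in particular $c>3>1$) the output is a feasible schedule of length $O(I)$, so $T\le O(I)$, and combined with $T=\Omega(I)$ this yields $T=\Theta(I)$.

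To bound the number of slots, let $k$ be the number of slots used and fix a link $v=l_t$ lying in the last slot $S_k$. By the rule of the algorithm, when $v$ was processed it was rejected from each of $S_1,\dots,S_{k-1}$, i.e. $a_{S_j^{(t)}}(v)>c^{-\alpha}$ for $j=1,\dots,k-1$, where $S_j^{(t)}$ denotes the contents of slot $j$ at that moment. These sets are pairwise disjoint, their union $S^{*}$ is contained in $L\setminus\{v\}$, and since affectance is additive over disjoint sets, $\sum_{j=1}^{k-1}a_{S_j^{(t)}}(v)=a_{S^{*}}(v)$, so $(k-1)c^{-\alpha}<a_{S^{*}}(v)$. (The length ordering plays no role here, matching the remark in Section~\ref{algorithm}.)

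The naive hope, $a_{S^{*}}(v)=O(I)$, is \emph{false}, and dealing with this is the main obstacle: a single link $w$ whose sender lies very close to $r_v$ contributes an arbitrarily large term $(d_{ww}/d_{wv})^\alpha$ while contributing only $1$ to $I_{r_v}(L)$. The fix is a per-slot counting argument. I would split each $S_j^{(t)}$ according to whether a link $w$ has $d_{wv}\ge d_{ww}$ (``far'') or $d_{wv}<d_{ww}$ (``near''). For far links the affectance term equals $\min\{1,(d_{ww}/d_{wv})^\alpha\}$, so the total far contribution over all of $S^{*}$ is at most $I_{r_v}(L)\le I$; consequently the number of slots $S_j$ whose rejection of $v$ is caused entirely by far links is at most $c^\alpha I$. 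For every remaining slot, the key observation is that it must contain at least one near link, i.e. a link $w$ with $d(s_w,r_v)<d_{ww}$; each such link contributes a full $1$ to $I_{r_v}(L)$, so there are at most $I$ of them in all of $L$, and since the sets $S_j^{(t)}$ are disjoint, at most $I$ slots can contain one. Adding the two counts gives $k-1\le(1+c^\alpha)I$, which is the desired bound.

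In short, the only subtle point is recognizing that the unbounded affectance terms come precisely from links whose senders cluster near $r_v$, and that such links are themselves directly counted, with weight one each, by the measure $I$; so a per-slot counting argument rather than a single affectance inequality is what closes the gap. Everything else is routine bookkeeping.
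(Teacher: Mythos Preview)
Your argument is correct and follows the same overall strategy as the paper: bound the number of slots by a per-slot estimate at the receiver $r_v$ of a link $v$ placed in the last slot, then invoke Theorem~\ref{T:lowerbound}. The paper, however, collapses your far/near case split into a single line. It observes that for each earlier slot $A_i$ one has $I_{r_v}(A_i)>1/c^\alpha$ outright: if every link $w\in A_i$ has $(d_{ww}/d_{wv})^\alpha\le 1$ then $I_{r_v}(A_i)=a_{A_i}(v)>1/c^\alpha$, and if some term exceeds $1$ then already $I_{r_v}(A_i)\ge 1>1/c^\alpha$ because $c>1$. Summing over $i$ gives $t-1<c^\alpha I$ directly. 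So the ``main obstacle'' you flagged dissolves once you compare $I_{r_v}$ and $a$ slot by slot rather than globally; your two-bucket count is a valid but slightly longer way to the same conclusion, yielding $(1+c^\alpha)I$ instead of $c^\alpha I$.
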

\begin{proof}
Suppose $A_1,A_2,\dots,A_t$ is the output of Algorithm~\ref{A:LINEAR}. Let $v$ be a link from $A_t$. By definition of the algorithm we have 
\[
\displaystyle a_{A_i}(v)>\frac{1}{c^\alpha}\mbox{, if }i<t.
\]
 Since we assume $c>1$, we have also
 \[
 \displaystyle I_{r_v}(A_i)>\frac{1}{c^\alpha}\mbox{, so }\displaystyle I_{r_v}(L)=\sum_{i=1}^{t-1}I_{r_v}(A_i)>(t-1)/c^\alpha.
 \]
  On the other hand we have $I_{r_v}(L)\leq I(L)$, so
  \[
  t<c^\alpha I(L)+1\mbox{,}
  \]
 which together with Theorem~(\ref{T:lowerbound}) completes the proof.\qed
\end{proof}

\section{On the complexity of scheduling with linear powers}\label{complexity}
We define the problem EQSCHEDULING, which is a simplified case of the problem of scheduling (again w.r.t. linear power assignment), when all links in the network have ``almost the same'' length (i.e. the lengths differ only by a constant factor).

\textit{EQSCHEDULING:} given a set of links $L=\{l_1,l_2,\dots,l_n\}$ in a network, which have lengths differing not more than a constant factor, and a natural number $K>0$, the question is if there is a partition of that set into not more than $K$ SINR-feasible subsets or \emph{slots}.

 To show that the problem is NP-hard, we reduce to it the NP-complete problem PARTITION(see~\cite{gj}), which is defined as follows.

\textit{PARTITION:} given a finite set $A$ of positive integers, the question is if there is a subset $A'\subseteq A$, for which $\displaystyle \sum_{a\in A'}{a}=\sum_{a\in A\setminus A'}{a}$ holds, i.e. it is exactly the half of $A$ by sum. 

When it is not ambiguous, we will identify an instance of PARTITION with the corresponding set of integers $A$. 

Here we use not the general problem PARTITION, but some specific case, which is equivalent to the general one. We need the following lemma.

For a finite set of integers $A$ let  $S(A)$ denote the sum $\sum_{a\in A}{a}$.
\begin{lemma}\label{L:partition}
For each instance $A$ of PARTITION there is another instance $B$, which is polynomially equivalent to $A$, and satisfies the following properties:
	\begin{equation*} A\subset B\mbox{, and } \left| B\right|=3\left| A\right| \end{equation*}
	\begin{equation}\label{E:aina}
		 \mbox{for }a\in A, \frac{a}{S(B)}\leq \frac{1}{2\left| A\right|^3}
	\end{equation}
	\begin{equation}\label{E:ainb}
		 \mbox{for }a\in B\setminus A, \frac{a}{S(B)}\leq \frac{1}{2\left| A\right|}
	\end{equation}
\end{lemma}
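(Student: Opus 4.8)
The plan is to construct $B$ from $A$ by first rescaling $A$ (multiplying every element by a suitable integer) so that the elements of $A$ become small relative to the total sum, and then adding to $A$ a small collection of new ``padding'' integers that do not change the answer to PARTITION but that inflate $S(B)$ enough to force the ratio bounds~(\ref{E:aina}) and~(\ref{E:ainb}). Since multiplying all elements of a PARTITION instance by a common positive integer preserves the existence of a balanced subset, and since adding a multiset of integers whose own sum is even together with a matching ``half'' that can always be completed to a balanced split preserves it as well, the key is to choose the multipliers and the padding so that all three displayed conditions hold while $|B| = 3|A|$.

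Concretely, set $n = |A|$ and let $M = S(A)$. First replace $A$ by $A^* = \{\, 2n^3 a : a \in A\,\}$; this is polynomially equivalent to $A$ and has sum $S(A^*) = 2n^3 M$. Now I would adjoin $2n$ new elements: for instance, take $n$ copies of the value $M$ and $n$ copies of the value $M$ again — more carefully, one picks $2n$ integers, each equal to $S(A^*)/(2n)$ rounded appropriately (or, to keep everything integral, each equal to $M$ scaled so that their total equals $S(A^*)$), arranged in $n$ equal pairs so that exactly one element of each pair can be thrown into $A'$; this guarantees the new elements can always be split evenly and hence $B$ is a YES-instance iff $A$ is. After this step $S(B) = 2 S(A^*) = 4 n^3 M$, the elements inherited from (the rescaled) $A$ satisfy $\frac{2n^3 a}{4n^3 M} = \frac{a}{2M} \le \frac{1}{2}$, which is too weak, so in fact I would push the rescaling factor higher — multiply $A$ by $2 n^3 \cdot (\text{something} \ge M)$ — and size the padding elements as $\Theta(S(B)/n)$, so that each $a \in A$ (after rescaling, but the lemma states the bound for the \emph{original} $a \in A \subset B$, so the rescaling must be folded into what we call $A$) obeys $a/S(B) \le 1/(2n^3)$ and each padding element $b \in B \setminus A$ obeys $b/S(B) \le 1/(2n)$. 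The inclusion $A \subset B$ and $|B| = 3|A|$ hold by construction.

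The main obstacle is bookkeeping: one must choose a single rescaling factor and a single family of $2n$ padding integers that \emph{simultaneously} (i) keeps all numbers integral, (ii) preserves the PARTITION answer exactly (the padding must be splittable into two equal-sum halves in a way compatible with any split of the $A$-part — the clean trick is to use $n$ identical pairs $\{c,c\}$), (iii) makes $S(B)$ large enough that the small element bound $1/(2n^3)$ holds for the $A$-part, and (iv) keeps each padding element at most $S(B)/(2n)$. Balancing (iii) against (iv) is the delicate point: the $A$-elements must be a $1/n^3$-fraction of the sum while the $\Theta(n)$ padding elements together must still make up a constant fraction of the sum, which forces each padding element to be about $S(B)/n$ — exactly the regime allowed by~(\ref{E:ainb}). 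I would verify that choosing, say, the padding elements all equal to $\lceil S(A^*)/(2n)\rceil$ (with a tiny correction on one element to fix parity and make the total come out even) meets every requirement, and that the resulting $B$ is computable in time polynomial in the bit-length of $A$, which establishes polynomial equivalence.
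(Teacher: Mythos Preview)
Your proposal circles around the right idea (pad $A$ with $2|A|$ new elements) but contains two genuine gaps, and the paper's construction is far simpler than what you attempt.

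\textbf{The rescaling is both unnecessary and ineffective.} All of the target inequalities are ratios $a/S(B)$; multiplying every number in sight by a common factor leaves these ratios unchanged. So when your first attempt yields $a^*/S(B)\le 1/2$ and you propose to ``push the rescaling factor higher'', nothing improves. Worse, any nontrivial rescaling destroys the literal requirement $A\subset B$, and ``folding it into what we call $A$'' is not a proof of the lemma as stated. The paper does no rescaling at all: with $m=\max A$ and $n=|A|$, it simply adjoins $2n$ copies of the single value $n^2m$. Then $S(B)=S(A)+2n^3m\ge 2n^3m$, so for $a\in A$ one has $a/S(B)\le m/(2n^3m)=1/(2n^3)$, and for the padding elements $n^2m/S(B)\le 1/(2n)$. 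The padding size $n^2m$ (not $S(A^*)/(2n)$ or $M$) is the parameter you were searching for.

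\textbf{The equivalence argument is only half done.} Using $n$ identical pairs $\{c,c\}$ shows that a balanced split of $A$ extends to one of $B$, but it does \emph{not} show the converse. For that you need each padding element to be large enough that any balanced split of $B$ is forced to put exactly $n$ padding elements on each side; concretely, $c>S(A)$ suffices, since then an imbalance of even one padding element exceeds anything the $A$-part can compensate. In the paper's construction $c=n^2m\ge n\cdot m\ge S(A)$, so if $B'$ is balanced and contains $k$ padding elements, then $|2S(B'\cap A)-S(A)|=2|n-k|\cdot n^2m$, and the left side is at most $S(A)\le n^2m$, forcing $k=n$ and hence $S(B'\cap A)=S(A)/2$. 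Your proposal never checks this direction.
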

\begin{proof}
Let $m$ denote a maximal element in $A$. Then we construct $B$ by adding $2\left|A\right|$ new elements with the same value $\left|A\right|^2m$ to $A$. As each element in $B\setminus A$ is not less than $S(A)$, then it's easy to check, that each partition of $A$ corresponds to a partition of $B$, and vice versa, so two instances of the problem are equivalent. The other two properties are straightforward.
\qed
\end{proof}
\begin{theorem}\label{T:complexity}
There are metric spaces, where  EQSCHEDULING is NP-hard.
\end{theorem}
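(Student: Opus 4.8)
\medskip
\noindent\textit{Proof (sketch of the intended reduction).}
The plan is to reduce PARTITION to EQSCHEDULING with the number of requested slots fixed at $2$. Given a PARTITION instance $A$, I would first apply Lemma~\ref{L:partition} to replace it with the equivalent instance $B=A\cup C$, where $k:=|A|$, $|C|=2k$, $C$ consists of $2k$ copies of one value, and~(\ref{E:aina}),~(\ref{E:ainb}) hold; instances with $k$ below any fixed constant are decided by brute force, so assume $k$ large. Write $\theta$ for the effective affectance threshold (the right side of~(\ref{E:SINRlin}), denoted $1/\beta$ there); tuning the noise $N$, take $\theta$ to be any fixed value in $(0,1)$, and set $\kappa:=2\theta/S(B)$, so that $\kappa b<1$ for all $b\in B$ by~(\ref{E:aina}),~(\ref{E:ainb}). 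Build a ``hub and spokes'' metric: a hub $H$ carries the receivers of two \emph{blocker links} $v_0,v_0'$ whose senders lie at distance $\lambda$ from $H$ on two distinct spokes (so both have length $\lambda$); and for every $b\in B$ a link $e_b$ has its sender at distance $\rho_b:=\lambda(\kappa b)^{-1/\alpha}>\lambda$ from $H$ along its own spoke and its receiver a further $\lambda$ out along that spoke. Every link has length exactly $\lambda$, so EQSCHEDULING's length-ratio condition holds trivially. To stay inside a space satisfying~(\ref{E:metric}) I would isometrically embed this finite metric (Fr\'echet embedding) into $(\mathbf{R}^N,\|\cdot\|_\infty)$ with Lebesgue measure, $N$ the number of nodes; there balls are cubes, so~(\ref{E:metric}) holds with constant $1$ and exponent $N$.

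Correctness rests on three features. Since $d(s_{v_0},r_{v_0'})=d(s_{v_0'},r_{v_0})=\lambda$, the links $v_0,v_0'$ have mutual affectance $1>\theta$ and cannot share a slot, so a feasible $2$-slot schedule puts them in different slots, with element-link sets $S_1,S_2$ respectively. Since $d(s_b,r_{v_0})=d(s_b,r_{v_0'})=\rho_b$, the affectance of $e_b$ on a blocker is exactly $(\lambda/\rho_b)^\alpha=\kappa b$, so the SINR constraint at $v_0$ reads $\kappa S(S_1)\le\theta$, i.e.\ $S(S_1)\le S(B)/2$ (writing $S(S_1)$ for the total of the corresponding values $b$), and likewise $S(S_2)\le S(B)/2$; with $S(S_1)+S(S_2)=S(B)$ this forces a partition of $B$, equivalently of $A$, and conversely such a partition yields a schedule in which each blocker sees affectance exactly $\theta$. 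Finally $d(s_x,r_b)>\rho_b$ for every $x\neq e_b$ (and $=2\rho_b+\lambda$ when $x=e_{b'}$ with $b,b'\in C$), so the affectance at $e_b$ is a sum of fewer than $3k$ terms, each below $\kappa b$ and, for the large-on-large terms, below $2^{-\alpha}\kappa b$; combining~(\ref{E:aina}),~(\ref{E:ainb}) with the observation that no feasible slot can hold more than $k$ of the $C$-links (else its blocker's SINR already fails), these sums stay strictly below $\theta$ for $k$ large. Hence the constructed instance is a ``yes'' instance iff $A$ is, and the reduction is polynomial.

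The step I expect to be the real obstacle is reconciling two conflicting demands: every link must have the \emph{same} length $\lambda$, yet the sender-to-hub distances $\rho_b\propto b^{-1/\alpha}$ must cover a range as wide as the PARTITION values themselves (possibly exponential in the input size), and at the same time the links $e_b$ must interfere with each other only negligibly. The triangle inequality is exactly what resists this — element senders drawn close to $H$ are forced close to one another — and it is overcome by the hub-and-spoke geometry, which makes each inter-element distance equal to $\rho_{b'}+\rho_b+\lambda$ and never smaller, together with the quantitative spreading guaranteed by Lemma~\ref{L:partition}, whose entire purpose is to ensure that the at most $3k$ parasitic affectances suffered by any $e_b$ sum to strictly less than $\theta$.
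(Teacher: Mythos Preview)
Your proposal is correct and follows essentially the same reduction as the paper: PARTITION (after the preprocessing of Lemma~\ref{L:partition}) is encoded by a pair of mutually incompatible ``blocker'' links together with one element link per value $b\in B$, with the sender--to--blocker distances calibrated so that the affectance of $e_b$ on a blocker equals $\kappa b$; the blocker SINR constraints then force an exact bisection of $B$, while the element SINR constraints are shown to be slack using the quantitative spreading~(\ref{E:aina}),~(\ref{E:ainb}). The paper's construction is the same star-shaped gadget (receivers $r_0,r_{n+1}$ at a common hub, element senders on spokes at distance $\sqrt[\alpha]{\beta S(B)/(2a_i)}$), and its verification that the element links are never the binding constraint proceeds by the same case split on whether the element lies in $A$ or in $B\setminus A$.

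Two cosmetic differences: you make every link have the same length $\lambda$ (the paper uses lengths $1$ and $1/\sqrt[\alpha]{3}$, still within a constant ratio), and you place the element receivers \emph{outside} the senders along the spokes rather than between sender and hub; both choices work and yours is arguably cleaner. Your Fr\'echet embedding into $(\mathbf{R}^N,\|\cdot\|_\infty)$ is unnecessary: condition~(\ref{E:metric}) is only used for the approximation algorithm, not for the hardness statement, and the paper simply declares the needed pairwise distances and lets ``the other distances be arbitrary, satisfying the axioms of metric spaces.''
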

\begin{proof}
We give a reduction from PARTITION. For a given instance $A$ of PARTITION let's construct the instance $B$, as described in Lemma \ref{L:partition}. We will construct an instance of EQSCHEDULING with a set of links $L$ in a network, so that the answer of $B$ is ``yes'' if and only if it is possible to schedule $L$ in two subsets. Let $B={a_1,a_2,\dots,a_n}$, and $n=|B|=3|A|$. The network has $2n+4$ nodes $s_i,r_i$, $i=0,1,\dots,n,n+1$. The set of links is $L=\{l_0,l_1,\dots,l_n, l_{n+1}\}$, where each link $l_i$ represents the sender-receiver pair $\left(s_i,r_i\right)$. The distances are defined as follows: we use $d_{ij}$ for denoting the distance $d(s_i,r_j)$. The links $l_0$ and $l_{n+1}$ have lenght $\displaystyle d_{00}=d_{n+1,n+1}=\frac{1}{\sqrt[\alpha]{3}}$. All other links have length 1. We set $d(r_0,r_{n+1})=0$, so that the links $l_0$ and $l_{n+1}$ cannot be scheduled together in the same group. Further,
\begin{gather*}
\displaystyle d_{i0}=d_{i,n+1}=\sqrt[\alpha]{\frac{\beta S(B)}{2  a_i}},\\
\displaystyle d_{0i}=d_{n+1,i} = d_{ii}+d_{i0}=\sqrt[\alpha]{\frac{\beta S(B)}{2 a_i}}+1,\\
\displaystyle d_{ij}=d(s_i,r_j)=1+d(s_i,r_0)+d(r_0, s_j)=1+\sqrt[\alpha]{\frac{\beta S(B)}{2 a_i}}+\sqrt[\alpha]{\frac{\beta S(B)}{2 a_j}}
\end{gather*}
for $i,j=1,2,\dots,n, i\neq j$. The other distances can be arbitrary, satisfying the axioms of metric spaces. First we show that the set of links $S=\{l_1,l_2,\dots,l_n\}$ is SINR-feasible. To do so we select any link, say the link $l_1$, and show, that the constraint (\ref{E:SINR}) is satisfied for $S$ and $l_1$. Taking into account the definition of the distances, the left part of (\ref{E:SINR}) becomes
\begin{multline*}
a_S(l_1)=\frac{1}{\displaystyle3\left(\sqrt[\alpha]{\frac{\beta S(B)}{2 a_1}}+1\right)^\alpha} + \sum_{i=2}^n{\frac{1}{\displaystyle\left(1+\sqrt[\alpha]{\frac{\beta S(B)}{2 a_i}}+\sqrt[\alpha]{\frac{\beta S(B)}{2 a_1}}\right)^\alpha}}\leq\\
 \leq\frac{1}{\displaystyle \frac{3\beta S(B)}{2 a_1}+3}+ \sum_{i=2}^n{\frac{1}{\displaystyle 1+\frac{\beta S(B)}{2 a_i}+\frac{\beta S(B)}{2 a_1}}}= \frac{1}{\beta}\left(\frac{1}{\displaystyle \frac{3S(B)}{2a_1}+\frac{3}{\beta}} + \sum_{i=2}^n{\frac{1}{\displaystyle\frac{1}{\beta}+\frac{S(B)}{2a_i}+\frac{S(B)}{2a_1}}} \right),
\end{multline*}
where we used the Lemma~\ref{L:ineq} with $s=\alpha$ and $r=1$, as we assume $\alpha>1$. For evaluating the last expression, we consider two cases:

1) if $1\in A$, then according to (\ref{E:aina}) we have $\displaystyle\frac{a_1}{S(B)}<\frac{1}{2|A|^3}$, so
\begin{multline*}
a_S(l_1)\leq \frac{1}{\beta}\left(\displaystyle \frac{1}{3|A|^3+3/\beta} + \sum_{i\in B\setminus A}{\frac{1}{\displaystyle 1/\beta+\frac{S(B)}{2a_i}+|A|^3}}+\sum_{i\in A\setminus \{1\}}{\frac{1}{\displaystyle 1/\beta+\frac{S(B)}{2a_i}+|A|^3}} \right)\leq\\
\leq \frac{1}{\beta}\left(\displaystyle \frac{1}{3|A|^3+3/\beta} + \frac{2|A|}{|A|+|A|^3+1/\beta}+\frac{|A|-1}{2|A|^3+1/\beta} \right).
\end{multline*}
As it is easy to see,  the right side is less than $1/\beta$ for instances with $|A|$ large enough.

2) if $1\notin A$, then according to (\ref{E:ainb}) we have $\displaystyle\frac{a_1}{S(B)}<\frac{1}{2|A|}$, so
\begin{multline*}
a_S(l_1)\leq \frac{1}{\beta}\left(\displaystyle \frac{1}{3|A|+3/\beta} + \sum_{i\in B\setminus \left( A\cup\{1\}\right)}{\frac{1}{\displaystyle 1/\beta+\frac{S(B)}{2a_i}+|A|}}+\sum_{i\in A}{\frac{1}{\displaystyle 1/\beta+\frac{S(B)}{2a_i}+|A|}} \right) \leq\\
\leq \frac{1}{\beta}\left(\displaystyle \frac{1}{3|A|+3/\beta} + \frac{2|A|-1}{2|A|+1/\beta}+\frac{|A|}{|A|^3+|A|+1/\beta} \right)=\\
=\frac{1}{\beta}\left(\displaystyle 1-\frac{|A|+3/\beta|A|+3/\beta^2+2/\beta}{6|A|^2+9/\beta|A|+3/\beta^2}+\frac{|A|}{|A|^3+|A|+1/\beta} \right).
\end{multline*}
The expression in parentheses is less than 1 when $|A|$ is large enough, so again $a_S(l_1)\leq 1/\beta$ holds.
 The affectance on the link $l_0$ by the set $S$ is equal to
 \[
  \displaystyle a_S(l_0)=\sum_{i=1}^n{\left(\frac{d_{ii}}{d_{i0}}\right)^\alpha}=\sum_{i=1}^n{\left(\frac{1}{\displaystyle\sqrt[\alpha]{\frac{\beta S(B)}{2  a_i}}}\right)^\alpha}=2/\beta,
  \]
  which is the same as the affectance on the link $l_{n+1}$ by $S$. So it follows, that $L$ can be scheduled into two sets if and only if the set $S$ can be partitioned into two subsets, so that each one affects the link $l_0$(and $l_{n+1}$) exactly by $\beta$. As it's not hard to check, this is the same as to solve the PARTITION problem instance $B$.
 The reduction is polynomial.
\qed
\end{proof}
\begin{corollary}
There are metric spaces, where EQSCHEDULING cannot be approximated within a constant factor less than $\displaystyle \frac{3}{2}$ unless $P=NP$.
\end{corollary}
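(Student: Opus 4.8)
The plan is to observe that the reduction in the proof of Theorem~\ref{T:complexity} is in fact a \emph{gap} reduction, with a multiplicative gap of $3/2$ between the optimal schedule length of ``yes'' and ``no'' instances, and then invoke the standard argument that a sufficiently good approximation algorithm would decide that gap and hence solve PARTITION.

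First I would pin down the two sides of the gap. For every instance produced by the reduction, $l_0$ and $l_{n+1}$ can never be placed in the same slot: since $d(r_0,r_{n+1})=0$ we get $d_{n+1,0}=d(s_{n+1},r_0)=d(s_{n+1},r_{n+1})=d_{n+1,n+1}=d_{00}$, so $a_{\{l_0,l_{n+1}\}}(l_0)=(d_{00}/d_{n+1,0})^\alpha=1>1/\beta$, which violates~(\ref{E:SINR}). Hence the optimal schedule length is always at least $2$. If $B$ is a ``yes'' instance of PARTITION, the proof of Theorem~\ref{T:complexity} exhibits a feasible $2$-schedule, so the optimum equals exactly $2$. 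If $B$ is a ``no'' instance, the same proof shows $L$ admits no partition into two feasible slots, so the optimum is at least $3$.

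Next I would run the reduction pipeline under the contrary assumption. Suppose some deterministic polynomial-time algorithm $\mathcal{A}$ approximates EQSCHEDULING within a constant factor $\rho<3/2$ on the metric spaces of Theorem~\ref{T:complexity}. Given a PARTITION instance $A$, construct $B$ via Lemma~\ref{L:partition} and then the link set $L$ (with $|L|=3|A|+2$) as in the proof of Theorem~\ref{T:complexity}; both steps are polynomial. Run $\mathcal{A}$ on $L$. If $B$ is a ``yes'' instance, $\mathcal{A}$ outputs a schedule of length at most $\rho\cdot 2<3$, hence at most $2$ (lengths are integral). If $B$ is a ``no'' instance, $\mathcal{A}$ outputs a valid schedule of length at least the optimum, hence at least $3$. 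Therefore the output has length $\le 2$ if and only if $B$, equivalently $A$, is a ``yes'' instance, and we have decided PARTITION in polynomial time; since PARTITION is NP-complete, $P=NP$.

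The only real work is the first step, and it is entirely contained in Theorem~\ref{T:complexity}: I merely have to read off that ``yes'' forces the optimum to be $2$ and ``no'' forces it to be at least $3$, together with the trivial lower bound of $2$ coming from the $l_0,l_{n+1}$ conflict. There is no obstacle beyond this bookkeeping; in particular, no gap amplification is needed. Note also that this construction cannot yield an inapproximability constant larger than $3/2$, since its gap is exactly $2$ versus $\ge 3$.
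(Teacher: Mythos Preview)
Your argument is correct and follows essentially the same route as the paper: both use the gap $2$ versus $\ge 3$ in the optimum of the instances produced by Theorem~\ref{T:complexity}, and conclude that a $\rho<3/2$ approximation would decide PARTITION in polynomial time. The only difference is cosmetic---you spell out explicitly why $l_0$ and $l_{n+1}$ cannot share a slot, whereas the paper simply invokes that fact from the proof of Theorem~\ref{T:complexity} (and the paper also notes the ``sufficiently large $|A|$'' caveat needed for the reduction, which you might want to mention).
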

\begin{proof}
Let $P\neq NP$. From the proof of Theorem~\ref{T:complexity} we see, that PARTITION can be polynomially reduced to EQSCHEDULING with $K=2$ in some metric spaces. Suppose there is a polynomial algorithm, which approximates EQSCHEDULING within a factor $\displaystyle\gamma<\frac{3}{2}$. Then let's consider an arbitrary instance $A$ (sufficiently large) of PARTITION. There is an instance $L$ of EQSCHEDULING with $K=2$, so that the answer for $A$ is 'yes' if and only if the linkset of $L$ can be scheduled in no more than 2 groups. Applying the approximation algorithm to $L$, we get the optimal schedule length with error at most a factor $\gamma$. If the resulting schedule has complexity not less than 3, then the optimal schedule length is at least $\displaystyle \frac{3}{\gamma}>2$, so the the answer of $A$ is 'no'. If the length of the resulting schedule is less than 3, then the optimal schedule length is no more than 2, so the answer of $A$ is 'yes'. This shows that the $\gamma$-approximation algorithm for EQSCHEDULING could be used to polynomially solve PARTITION, which is a contradiction to the assumption that $P\neq NP$.
\qed
\end{proof}


\begin{thebibliography}{99}

\bibitem{dinitz} M. Andrews, M. Dinitz. 
\newblock Maximizing capacity in arbitrary wireless networks in the SINR model: Complexity and game theory. 
\newblock Proc. of 28th Annual IEEE Conference on Computer Communications (INFOCOM), 2009.

\bibitem{connectivity1} C. Avin, Z. Lotker, F. Pasquale, Y.-A. Pignolet. 
\newblock A note on uniform power connectivity in the SINR model. 
\newblock Proc. of 5th International Workshop on Algorithmic Aspects of Wireless Sensor Networks (ALGOSENSORS) 2009.

\bibitem{santienq} G. Brar, D. Blough, P. Santi.
\newblock Computationally E?cient Scheduling with the Physical Interference Model for Throughput Improvement in Wireless Mesh Networks. 
\newblock Proc. of the 12th ACM Annual International Conference on Mobile Computing and Networking (MobiCom), 2006

\bibitem{chafekar} D. Chafekar, V. Kumar, M. Marathe, S. Parthasarathi, A. Srinivasan.
 \newblock Cross-layer Latency Minimization for Wireless Networks using SINR Constraints. 
 \newblock Proc. of ACM International Symposium on Mobile Ad Hoc Networking and Computing (MobiHoc), 2007.

\bibitem{oblivious} A. Fangh\"{a}nel, T. Ke\ss{elheim}, H. R\"{a}cke, B. V\"{o}king.
\newblock Oblivious interference scheduling.
\newblock Proc. 28th Symposium on Principles of Distributed Computing (PODC), 2009.

\bibitem{fng} A. Fangh\"{a}nel, T. Ke\ss{elheim}, B. V\"{o}king.
\newblock Improved Algorithms on Latency Minimization in Wireless Networks.
\newblock Proc. 36th International Colloqium on Automata, Languages and Programming (ICALP), 2009.

\bibitem{gj} M.R. Garey, D.S. Johnson.
\newblock Computers and Intractability : A Guide to the Theory of NP-Completeness.
\newblock W.H. Freeman, 1979.

\bibitem{gus2} O. Goussevskaia, M.M. Halld\'{o}rsson, R. Wattenhofer, Emo Welzl.
\newblock Capacity of Arbitrary Wireless Networks.
\newblock 28th Annual IEEE Conference on Computer Communications (INFOCOM), 2009.

\bibitem{gus1} O. Goussevskaia, Y. Oswald, R. Wattenhofer.
\newblock Complexity in Geometric SINR.
\newblock ACM International Symposium on Mobile Ad Hoc Networking and Computing (MOBIHOC), 2007.

\bibitem{hal1} M.M. Halld\'{o}rsson.
\newblock Wireless Scheduling with Power Control.
\newblock Proc. 17th annual European Symposium on Algorithms (ESA), 2009.

\bibitem{halldorsson1} M.M. Halld\'{o}rsson.
\newblock Wireless Scheduling with Power Control.
\newblock http://arxiv.org/abs/1010.3427.

\bibitem{halmit1} M.M. Halld\'{o}rsson and P. Mitra.
\newblock  Nearly Optimal Bounds for Distributed Wireless Scheduling in the SINR Model.
\newblock Proc. 38th International Colloqium on Automata, Languages and Programming (ICALP), 2011. 

\bibitem{halmit} M.M. Halld\'{o}rsson and P. Mitra.
\newblock  Wireless Capacity with Oblivious Power in General Metrics.
\newblock Proc. of ACM-SIAM Symposiun on Discrete Algorithms (SODA), 2011. 

\bibitem{hal} M.M. Halld\'{o}rsson and R. Wattenhofer.
\newblock Wireless Communication is in APX.
\newblock Proc. 36th International Colloqium on Automata, Languages and Programming (ICALP), 2009.

\bibitem{hrd} G.H. Hardy, J.E. Littlewood, G. P\'{o}lya.
\newblock Inequalities.
\newblock Cambridge University Press, 1934.

\bibitem{kesselheimpc} T. Ke\ss{elheim}. 
\newblock A Constant-Factor Approximation for Wireless Capacity Maximization with Power Control in the SINR Model. 
\newblock Proc. of 22nd ACM-SIAM Symposium on Discrete Algorithms (SODA), 2011.

\bibitem{kesvok} T. Ke\ss{elheim}, B. V\"{o}king.
\newblock  Distributed Contention Resolution in Wireless Networks.
\newblock Proc. of 24th International Symposium on Distributed Computing (DISC), 2010.

\bibitem{connectivity} T. Moscibroda, R. Wattenhofer.
\newblock The complexity of connectivity in wireless networks.
\newblock 25th Annual IEEE Conference on Computer Communications (INFOCOM), 2006.

\bibitem{antigrp} T. Moscibroda, R. Wattenhofer, Y. Weber.
\newblock Protocol Design Beyond Graph-Based Models.
\newblock Hot Topics in Networks (HotNets), 2006.

\bibitem{first} T. Moscibroda, R. Wattenhofer, A. Zollinger.
\newblock Topology control meets SINR: The scheduling complexity of arbitrary topologies.
\newblock ACM International Symposium on Mobile Ad Hoc Networking and Computing (MOBIHOC), 2007.

\bibitem{tonoyanlin} T. Tonoyan.
\newblock A constant factor algorithm for scheduling with linear powers. 
\newblock Proc. of 2010 International Conference on Intelligent Network and Computing (ICINC), 2010. 

\bibitem{tonoyanpc} T. Tonoyan.
\newblock Algorithms for Scheduling with Power Control in Wireless Networks. 
\newblock Proc. of 1st International ICST Conference on Theory and Practice of Algorithms 
in (Computer) Systems (TAPAS), 2011.

\end{thebibliography}
\end{document}